\newcommand{\nl}{\par\noindent}   %---new line without indentation
\newtheorem{theo}{Theorem}[section]
\newtheorem{lem}[theo]{Lemma}
\begin{document}

\title[Continuous functions as quantum operations]
{\textbf{Continuous functions as quantum operations: a probabilistic approximation}}
%-
%-
%--------------------------------------------------------------------------------------------------------------------------
\author[H.~Freytes]{Hector Freytes}
\address[H.~Freytes]{Instituto Argentino de Matem\'atica (CONICET), Saavedra 15,  AR-1083,
Buenos Aires, Argentina.}

%--------------------------------------------------------------------------------------------------------------------------

\author[A.~Ledda]{Antonio Ledda}

\address[H.~Freytes, A.~Ledda, G.~Sergioli]{    Universit\`a di Cagliari, via Is Mirrionis 1, I-09123 Cagliari, Italy.}
%--------------------------------------------------------------------------------------------------------------------------
\author[G.~Sergioli]{Giuseppe Sergioli}
%\address[A.~Ledda, G.~Sergioli]{Re, via Is Mirrionis 1, I-09123 Cagliari, Italy.}

%\address[M.~Alizadeh]{School of Mathematics, Statistics and Computer Science, College of Science, University of Teheran, P.O. Box 14155-6455, Teheran, Iran.}
%--------------------------------------------------------------------------------------------------------------------------

%--------
\thanks{Corresponding author: A.~Ledda, \tt{antonio.ledda78@gmail.com}}
\date{\today}
\keywords{Quantum operations, PMV-algebras, quantum computation.}

%------------------------------------------------------------------------------------------------------------

\begin{abstract}
In this note we propose a version of the classical Stone-Weierstrass theorem in the context of quantum operations, by introducing a particular class of quantum operations, dubbed \emph{polynomial quantum operations}. This result permits to interpret from a probabilistic point of view, and up to a certain approximation, any continuous function from the real cube $[0,1]^n$ to the real interval $[0,1]$ as a quantum operation. 
\end{abstract}
%----------------------------------------------------------------------------------------------------------------------------------------------------%----------------------------------------------------------------------------------------------------------------------------------------------------
\maketitle
\section*{Introduction}
%----------------------------------------------------------------------------------------------------------------------------------------------------
\noindent Since the classical work of Birkhoff and von Neumann \cite{Birkhoff et al. 36}, logical and algebraic perspectives of several aspects of quantum theory have been proposed. Leading examples are orthomodular lattices \cite{Kal}, and effect algebras, appeared independently under several names (e.g. D-posets \cite{KC}) as a generalization of orthomodular posets \cite{GG, Gu}. Moreover, effect algebras play a fundamental role in various studies on fuzzy probability theory \cite{BHS, GUD-fuz} also.\\
In recent times, quantum computation itself stirred increasing attention, and an array of related algebraic structures arose \cite{CDGL1, DGG, GUD1,GLSP}. Those structures stem from an abstract description of circuits obtained by combinations of quantum gates \cite{Dalla Chiara et al. 2005}. Let us mention as examples {\it quantum MV-algebras} \cite{Giunt}, {\it quasi MV-algebras}, {\it $\sqrt{'}$quasi MV-algebras} and \emph{product MV-algebras} \cite{LKPG, GLP, EGM, MR}.
%----------------------------------------------------------------------------------------------------------------------------------------------------
\nl Even if those structures are plainly related to quantum computing, some of the functions in their types are algebraic abstractions of irreversible transformations, e.g. the truncated disjunction ``$\oplus$'' \cite{Dalla Chiara et al. 2005}: the reference frame shifts from quantum gates to quantum operations \cite{AKN}. In the present paper we show that all those algebraic structures can be fully settled into the general model of quantum computing, based on quantum operations acting on density operators. 
To this aim, we propose a simple construction on density matrices (dubbed \emph{polynomial operations}) that permits to resort, in terms of probability distributions, to a Stone-Weierstrass type theorem. That result implies that any continuous function can be regarded, from a probabilistic point of view, and up to a certain approximation, as a quantum operation, Theorems \ref{KRO3} and \ref{KRO4}. 
The irreversible quantum operational approach has plenty of advantages in the implementation of quantum computational devices: as Aharonov, Kitaev and Nisan have discussed \cite{AKN}, there are several problems very difficult or impossible to deal in the usual unitary model of quantum computation.  On the other hand, these problems (such as measurements in the middle of computation or noise and decoherence ) disappear in the no-unitary (i.e. no-reversible) model. In fact, although quantum computations may allow measurements in the middle of the computation, however, the state of the computation after a measurement is a mixed state.  Moreover, in order to implement quantum computers devices, noise and, in particular, decoherence are serious obstacles. The main problem in this interface between quantum physics and quantum computation models is the fact that quantum noise and decoherence are non-unitary operations that cause a pure state to become a mixed state. 

%----------------------------------------------------------------------------------------------------------------------------------------------------
\nl This paper is organized as follows: in Section \ref{bn} we provide all the basic notions, in Section \ref{sw} we show an overview of a probabilistic version of the Stone-Weierstrass theorem in the framework of quantum operations (further detalis are showed in \cite{FSA}), in Section \ref{rcn} some applications of our main result to the case of product MV-algebra \cite{MR} are given, and lastly in Section \ref{final}  some possible future investigation issues are illustrated and a few conclusive remarks drawn. 
%----------------------------------------------------------------------------------------------------------------------------------------------------%----------------------------------------------------------------------------------------------------------------------------------------------------
\section{Basic notions}\label{bn}
\nl A quantum system in a pure state is described by a unit vector in a
Hilbert space. In the Dirac notation a pure state is denoted by $\vert \psi \rangle$ ($\langle \psi\vert$). A {\it quantum bit} or {\it qubit}, the fundamental concept of quantum computation, is a pure state in the Hilbert space $\mathbb{C}^2$. The standard orthonormal basis $\{ \vert 0\rangle , \vert 1 \rangle \}$ of $\mathbb{C}^2$ is called the {\it logical
basis}. Thus a qubit $\vert \psi \rangle$ may be written as a linear superposition of the basis vectors with complex
coefficients $\vert \psi \rangle = c_0\vert 0 \rangle + c_1 \vert 1 \rangle$ with  $\vert c_0 \vert^2 + \vert c_1 \vert^2 = 1 $. 
%----------------------------------------------------------------------------------------------------------------------------------------------------
\nl Quantum mechanics reads out the information content of a pure state via the Born rule, according to which the probability value assigned to a qubit is defined as follows: 
%----------------------------------------------------------------------------------------------------------------------------------------------------
$$\mathtt{p}(\vert\psi \rangle) = \vert c_1 \vert ^2.$$
%----------------------------------------------------------------------------------------------------------------------------------------------------
\nl The states of interest  for  quantum computation lie in the
tensor product $\otimes^n \mathbb{C}^2 = \mathbb{C}^2\otimes \mathbb{C}^2 \otimes \cdots \otimes
\mathbb{C}^2$, where $\otimes^n \mathbb{C}^2 =\mathbb{C}^2$ if $n=1$. The space $\otimes^n \mathbb{C}^2$ is a $2^n$-dimensional complex Hilbert space. The $2^n$-{\it computational basis} consists of the $2^n$ orthogonal states $\vert \iota \rangle$ $(0 \leq \iota \leq 2^n)$, where $\iota$ is in binary representation and it can be
seen as the tensor product of the states $ \vert \iota_1 \rangle \otimes \vert \iota_2 \rangle \otimes \cdots \otimes
\vert \iota_n \rangle$ where $\iota_j \in \{0,1\}$. 
A pure state $\vert \psi \rangle \in \otimes^n \mathbb{C}^2$ is generally a superposition of
the basis vectors: $\vert \psi \rangle = \sum_{\iota = 1}^{2^n}
c_{\iota}\vert \iota \rangle$ with $\sum_{\iota = 1}^{2^n} \vert
c_{\iota} \vert^2 = 1$.
%----------------------------------------------------------------------------------------------------------------------------------------------------
\nl In general, a quantum system is not in a pure state. This might be because the system is coupled with an environment, it is subject to a measurement process etc. In those cases, the state-evolution is no longer reversible and the system is said to be in a {\it mixed state}. A convenient mathematical description of a mixed state is given by the notion of \emph{\ density operator}, i.e. an Hermitian positive operator $\rho$ on a $2^n$-dimensional complex Hilbert space with trace $\mathtt{tr}(\rho )=1$. 

%----------------------------------------------------------------------------------------------------------------------------------------------------

\nl A pure state $\vert \psi
\rangle$ can be represented as a limit case of mixed state in the form $\rho = \vert \psi
\rangle \langle \psi \vert$. In particular, each vector of the
logical basis of $\mathbb{C}^2$ can be associated to a density
operator $P_0 := \vert 0 \rangle \langle 0 \vert$ or $P_1 := \vert 1
\rangle \langle 1 \vert$ that represents the {\it falsity-property} and the
{\it truth-property}, respectively. One can represent an arbitrary density matrix $\rho$ in
terms of a tensor products of the Pauli matrices: 
%----------------------------------------------------------------------------------------------------------------------------------------------------
$$ \sigma_0 = \left(\begin{array}{cc}
1 & 0 \\
0 & 1
\end{array}\right)
\hspace{0.5cm} \sigma_x = \left(\begin{array}{cc}
0 & 1 \\
1 & 0
\end{array}\right)
\hspace{0.5cm} \sigma_y = \left(\begin{array}{cc}
0 & -i \\
i & 0
\end{array}\right)
\hspace{0.5cm} \sigma_z = \left(\begin{array}{cc}
1 & 0 \\
0 & -1
\end{array}\right)
$$
%----------------------------------------------------------------------------------------------------------------------------------------------------
\nl in the following way:  

$$\rho = \frac{1}{2^n} \sum_{\mu_1\ldots \mu_n} P_{\mu_1 \ldots \mu_n} \sigma_{\mu_1}\otimes \cdots\otimes \sigma_{\mu_n},$$

\nl  where $\mu_i \in \{0,x,y,x\}$ for each $i \in \{1,\ldots ,n\}$. The real expansion coefficients $P_{\mu_1 \ldots
\mu_n}$ are given by $P_{\mu_1 \ldots \mu_n} =\mathtt{tr}(\sigma_{\mu_1}\otimes \cdots \otimes \sigma_{\mu_n}\rho)$. 
Since the eigenvalues of the Pauli matrices are $\pm 1$, the expansion coefficients satisfy the inequality $\vert P_{\mu_1 \ldots \mu_n} \vert \leq 1 $. In what follows, for sake of simplicity, we will use without distinction $I$ or $\sigma_0$.
%----------------------------------------------------------------------------------------------------------------------------------------------------
We denote by $\mathcal{ D}(\otimes^n \mathbb{C}^2)$ the set of all density operators of $\otimes^n\mathbb{C}^2$; hence the set $\mathcal{D} =\bigcup_{i\in N} \mathcal{D}(\otimes^n \mathbb{C}^2)$ will be the set of all possible density operators. Moreover, we can identify in each space $\mathcal{D}(\otimes^n \mathbb{C}^2)$ two special operators $P_0^{(n)} = \frac{1}{2^n} I^{n-1}\otimes P_0 $ and $P_1^{(n)} = \frac{1}{2^n} I^{n-1}\otimes P_1 $ that represent, in this framework, the falsity-property and the truth-property, respectively. 
%----------------------------------------------------------------------------------------------------------------------------------------------------
The \emph{probability of truth} $\mathtt{p}$ of a density operator $\rho$ is dictated by the Born rule \cite{beltrametti2014quantum-p1,beltrametti2014quantum-p2} and equals 
$$\mathtt{p}(\rho) = \mathtt{tr}(P_1^{(n)} \rho).$$
\nl In case $\rho = \vert \psi \rangle \langle \psi \vert$, where $\vert \psi \rangle = c_0\vert 0 \rangle + c_1 \vert 1 \rangle$, then $\mathtt{p}(\rho) = \vert c_1 \vert ^2$. 
%----------------------------------------------------------------------------------------------------------------------------------------------------
\nl Let $\rho \in \mathcal{D}( \mathbb{C}^2)$. Then $\rho$ can be represented as a linear superposition $\rho = \frac{1}{2}(I + r_x\sigma_x + r_y\sigma_y + r_z\sigma_z)$, where $r_x$, $r_y$, $r_z$ are real numbers such that $r_x^2 + r_y^2 + r_z^2 \leq 1$. Therefore, every density operator $\rho$ in $\mathcal{D}(\mathbb{C}^2)$ has the matrix representation: 
\begin{equation}\label{uno}
\rho = \frac{1}{2}
\left(\begin{array}{cc}
1+r_z & r_x -ir_y \\
r_x +ir_y & 1-r_z
\end{array}\right)
=
\left(\begin{array}{cc}
1- \alpha & \beta \\
\beta^*   & \alpha
\end{array}\right)
\end{equation}
%----------------------------------------------------------------------------------------------------------------------------------------------------
\nl Furthermore, any real number $\lambda$ ($0\leq \lambda \leq 1$) uniquely determines a density operator as follows:
\begin{equation}\label{due}
\rho_{\lambda} = (1- \lambda)P_0 + \lambda P_1 = \frac{1}{2}(I + (1-2\lambda) \sigma_z) =
\left(\begin{array}{cc}
1- \lambda & 0 \\
0 & \lambda
\end{array}\right) 
\end{equation}
%----------------------------------------------------------------------------------------------------------------------------------------------------
In virtue of (\ref{uno}) and (\ref{due}), one may verify that, whenever $\rho \in \mathcal{D}(\mathbb{C}^2)$, then $\mathtt{p}(\rho) = \frac{1-r_z}{2}$ and $\mathtt{p}(\rho_\lambda) = \lambda$. Thus each density operator $\rho$ in $\mathcal{D}(\mathbb{C}^2)$ can be written as

\begin{equation}
\rho =
\left(\begin{array}{cc}
1-\mathtt{p}(\rho) & a \\
a^* & \mathtt{p}(\rho)
\end{array}\right) 
\end{equation}
%----------------------------------------------------------------------------------------------------------------------------------------------------

\nl In the usual model of quantum computation the state of a system is pure and the operations ({\it quantum gates}) are represented by unitary operators. Nevertheless, in case a system is not completely isolated from the
environment its evolution is, in general, irreversible. A model of quantum computing that relates to that phenomenon is mathematically described by means of {\it quantum operations} (as quantum gates) acting on density operators (as information quantities).
%----------------------------------------------------------------------------------------------------------------------------------------------------

\nl Given a finite dimensional complex Hilbert space $H$, we will denote by $\mathcal{ L}(H)$ the vector space of all linear operators on $H$. Let $H_1,H_2$ be two finite dimensional complex Hilbert spaces. A {\it super operator } is a linear operator $\mathcal{ E}: \mathcal{ L}(H_1)\rightarrow \mathcal{ L}(H_2)$ sending density operators to density operators \cite{AB}. This is equivalent to say that $\mathcal{ E}$ is trace-preserving and positive, i.e. sends positive semi-definite Hermitian operators to positive semi-definite Hermitian operators. A super operator $\mathcal{ E}$ is said to be a {\it quantum operation} iff the super operator $\mathcal{ E}\otimes I_H $ is positive, where $I_H $ is the identity super operator on an arbitrary finite dimensional complex Hilbert space $H$. In this case $\mathcal{ E}$ is also called {\it completely positive}. The following theorem, dued to K. Kraus \cite{K},  provide an equivalent definition of quantum operations:

\begin{theo}\label{kraus}%[Kraus Representation Theorem\cite{K}]
A linear operator $\mathcal{ E}: \mathcal{ L}(H_1)\rightarrow \mathcal{ L}(H_2)$ is a quantum operation iff $\forall \rho \in \mathcal{ L}(H_1)$:

$$\mathcal{ E}(\rho) = \sum_i A_i \rho A_i^{\dagger}$$ for some set of operators $\{A_i\}$ such that $\sum_i A_i^{\dagger} A_i = I $.
\end{theo}

\section{A probabilistic Stone-Weierstrass type theorem}\label{sw}

\nl The aim of the present section is to propose a representation, in probabilistic terms, of a particular class of polynomials via quantum operations. Such a result will be expedient to prove a probabilistic Stone-Weierstrass type theorem. First of all, let us introduce some notations and preliminary definitions.
The term {\it multi-index} denotes an ordered $n$-tuple $\alpha = (\alpha_1, \ldots, \alpha_n)$ of non negative integers $\alpha_i$. The
{\it order} of $\alpha$  is given by $\vert \alpha \vert = \alpha_1 + \ldots \ldots + \alpha_n$. If ${\bf x} = (x_1, \ldots, x_n)$ is an $n$-tuple of
variables and $\alpha = (\alpha_1, \ldots, \alpha_n)$ a multi-index, the monomial ${\bf x}^{\alpha}$ is defined by ${\bf x}^{\alpha} = x_1^{\alpha_1}x_2^{\alpha_2} \ldots  x_n^{\alpha_n}$. In this language a real polynomial of order $k$ is a function $P({\bf x}) = \sum_{\mid \alpha \mid \leq k} a_\alpha {\bf x}^\alpha$ such that $a_\alpha \in \mathbb{R}$.

\nl Let ${\bf x} = (x_1, \ldots, x_n)$ and $k$ be a natural number. If we define the set $D_k({\bf x})$ as follows:

\begin{equation}
 D_k({\bf x}) = \{(1-x_1)^{\alpha_1} x_1^{\beta_1} \ldots (1-x_n)^{\alpha_n} x_n^{\beta_n} : \alpha_i + \beta_i = k, \hspace{0.2cm} i \in \{1,\ldots ,n\} \},
\end{equation}

\nl then we obtain the following useful lemmas:

\begin{lem}\label{KRO1}
Let ${\bf X}_1, \ldots ,{\bf X}_n$ be a family of matrices such that 
$$ {\bf X}_i = \left(\begin{array}{cc}
1-x_i & b_i \\
b_i^* & x_i
\end{array}\right)
$$ and let ${\bf X} = (\otimes^k{\bf X}_1)\otimes (\otimes^k{\bf X}_2)\otimes \cdots \otimes (\otimes^k{\bf X}_n) $. Then 
$$Diag({\bf X}) = D_{k}(x_1,\ldots,x_n).$$

\end{lem}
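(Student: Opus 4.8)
The plan is to reduce the statement to the elementary fact that the diagonal of a tensor product of matrices is obtained by multiplying the corresponding diagonal entries of the factors, so that the off-diagonal entries $b_i, b_i^*$ play no role. First I would recall that for $2 \times 2$ matrices $M_1, \dots, M_m$ the matrix $M_1 \otimes \cdots \otimes M_m$ acts on $\otimes^m \mathbb{C}^2$, and its rows and columns are naturally indexed by binary strings $s = (s_1, \dots, s_m) \in \{0,1\}^m$. The entry in position $(s,t)$ is the product $\prod_{j=1}^m (M_j)_{s_j t_j}$; in particular the diagonal entry at $s$ is $\prod_{j=1}^m (M_j)_{s_j s_j}$, which depends only on the diagonal data of the factors.

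Next I would apply this to ${\bf X} = (\otimes^k {\bf X}_1) \otimes \cdots \otimes (\otimes^k {\bf X}_n)$, a tensor product of $nk$ copies of $2 \times 2$ matrices. A diagonal index is a binary string of length $nk$, which I group into $n$ consecutive blocks of length $k$, the $i$-th block corresponding to the $k$ factors equal to ${\bf X}_i$. Since $({\bf X}_i)_{00} = 1 - x_i$ and $({\bf X}_i)_{11} = x_i$, if the $i$-th block of $s$ contains exactly $\beta_i$ ones and $\alpha_i = k - \beta_i$ zeros, the corresponding diagonal entry of ${\bf X}$ factors as $\prod_{i=1}^n (1 - x_i)^{\alpha_i} x_i^{\beta_i}$, which is precisely a generic element of $D_k({\bf x})$ with $\alpha_i + \beta_i = k$.

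To conclude the set equality I would argue both inclusions. Every diagonal entry of ${\bf X}$ has the form above, giving $Diag({\bf X}) \subseteq D_k({\bf x})$. Conversely, given any exponents $(\alpha_i, \beta_i)$ with $\alpha_i + \beta_i = k$, I can exhibit a binary string $s$ whose $i$-th block carries exactly $\beta_i$ ones; its diagonal entry is the prescribed monomial, whence $D_k({\bf x}) \subseteq Diag({\bf X})$.

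I expect the only delicate point to be bookkeeping rather than genuine difficulty: one must keep the set-versus-multiset distinction in mind, since the diagonal of ${\bf X}$ carries $2^{nk}$ entries counted with repetition, whereas $D_k({\bf x})$ collects only the $(k+1)^n$ distinct monomials. Because a diagonal value depends solely on the number of ones in each block and not on their positions, many entries coincide; reading $Diag({\bf X})$ as the set of distinct diagonal values makes the two sides agree. Fixing the indexing convention for tensor-product coordinates cleanly is therefore the step that demands the most care.
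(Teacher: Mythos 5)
Your proof is correct and follows essentially the same route as the paper's: both reduce the claim to the fact that a diagonal entry of a tensor product is the product of the corresponding diagonal entries of the factors, so the off-diagonal terms $b_i, b_i^*$ are irrelevant. Your version is simply a more careful writeup -- the explicit binary-string indexing, the two inclusions, and the set-versus-multiset remark make rigorous what the paper dispatches with ``it can be verified that''.
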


\begin{proof}
It can be verified that $\otimes^{k}X_{i}= \{h_1 h_2 \ldots  h_k: h_j \in \{(1-x_i),x_i\}, 1\leq j \leq k\}= \{(1-x_1)^{\alpha} x_1^{\beta} : \alpha + \beta = k \}$. Thus, $(\otimes^k{\bf X}_1)\otimes (\otimes^k{\bf X}_2)\otimes \cdots \otimes (\otimes^k{\bf X}_n) =\{(1-x_1)^{\alpha_1} x_1^{\beta_1} \ldots (1-x_n)^{\alpha_n} x_n^{\beta_n} : \alpha_i + \beta_i = k, \hspace{0.2cm} i \in \{1,\ldots ,n\} \}$. Whence our claim follows.
\end{proof}

\begin{lem}\label{KRO2}
Let ${\bf x} = (x_1, \ldots, x_n)$ and $k$ be a natural number. For any monomial ${\bf x}^{\alpha}$, such that $\mid \alpha \mid \leq k$, the following conditions hold: 
\begin{enumerate}
\item
${\bf x}^\alpha = \sum_{{\bf y} \in D_k({\bf x})} \delta_{{\bf y}}{\bf y}$;

\item 
$1-{\bf x}^{\alpha} = \sum_{{\bf y} \in D_k({\bf x})} \gamma_y{\bf y}$;    
\end{enumerate}

where $\delta_y$ and $\gamma_y$ are in $\{0,1\}$.
\end{lem}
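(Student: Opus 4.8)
The plan is to read both the set $D_k({\bf x})$ and the summations in the statement through Lemma \ref{KRO1}: namely, to regard $D_k({\bf x})$ not as a set of distinct polynomial values but as the family of diagonal entries of ${\bf X}$, indexed by the positions along the diagonal. Concretely, a diagonal position amounts to a choice, for every variable index $i\in\{1,\dots,n\}$ and every tensor slot $j\in\{1,\dots,k\}$, of an entry $h_{i,j}\in\{1-x_i,\,x_i\}$, the associated element of $D_k({\bf x})$ being $\prod_{i,j}h_{i,j}$. Under this reading, prescribing $\delta_{{\bf y}},\gamma_{{\bf y}}\in\{0,1\}$ just means selecting a sub-collection of these $2^{nk}$ positions, each counted with multiplicity one. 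This is the decisive point: the value $(1-x_1)^{\alpha_1}x_1^{\beta_1}\cdots(1-x_n)^{\alpha_n}x_n^{\beta_n}$ occurs at $\binom{k}{\alpha_1}\cdots\binom{k}{\alpha_n}$ distinct positions, and it is exactly this repetition that lets one avoid the binomial coefficients produced by the naive value-wise expansion of $1=\prod_i\big((1-x_i)+x_i\big)^{k}$.

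The first ingredient I would record is a resolution of unity: summing over all diagonal positions gives
$$\sum_{{\bf y}\in D_k({\bf x})}{\bf y}=\prod_{i=1}^{n}\prod_{j=1}^{k}\big((1-x_i)+x_i\big)=1,$$
equivalently $\mathtt{tr}({\bf X})=\prod_{i}(\mathtt{tr}\,{\bf X}_i)^{k}=1$. For item (1), I would then prove ${\bf x}^{\alpha}=\sum_{{\bf y}}\delta_{{\bf y}}{\bf y}$ by an explicit selection. Since $\mid\alpha\mid\leq k$ forces $\alpha_i\leq k$ for every $i$, for each variable $x_i$ I fix the first $\alpha_i$ of its $k$ tensor slots to carry the entry $x_i$ and leave the remaining $k-\alpha_i$ slots free; let $T$ be the sub-collection of positions meeting all these constraints simultaneously. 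Summing the diagonal entries over $T$ factorizes as
$$\sum_{{\bf y}\in T}{\bf y}=\prod_{i=1}^{n}\Big(x_i^{\alpha_i}\prod_{j=\alpha_i+1}^{k}\big((1-x_i)+x_i\big)\Big)=\prod_{i=1}^{n}x_i^{\alpha_i}={\bf x}^{\alpha},$$
so putting $\delta_{{\bf y}}=1$ exactly for ${\bf y}\in T$ and $\delta_{{\bf y}}=0$ otherwise settles (1).

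Item (2) is then immediate by complementation: subtracting the last display from the resolution of unity gives $1-{\bf x}^{\alpha}=\sum_{{\bf y}\notin T}{\bf y}$, so the choice $\gamma_{{\bf y}}=1-\delta_{{\bf y}}\in\{0,1\}$ works. The only genuine obstacle is the one flagged in the first paragraph, namely recognizing that the coefficients must be understood positionally rather than value-wise. Read as a statement about distinct polynomial values the lemma is in fact false already for $n=1$, $k=2$, ${\bf x}^{\alpha}=x^{2}$, since no $\{0,1\}$-combination of the three values $(1-x)^{2},(1-x)x,x^{2}$ equals $1-x^{2}$; it is precisely the doubled position carrying $(1-x)x$ in $Diag({\bf X})$ that repairs this. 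Once the positional viewpoint is fixed, the remaining steps are the routine factorizations displayed above.
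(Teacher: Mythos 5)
Your proof is correct, and it runs on the same machinery as the paper's: the diagonal matrices ${\bf X}_i$, their tensor powers, the resolution of unity $\mathtt{tr}({\bf X})=1$, and complementation $\gamma_{{\bf y}}=1-\delta_{{\bf y}}$ for item (2). It differs in two substantive respects, both to your credit. First, for item (1) the paper does not select positions inside $Diag({\bf X})$ directly; it forms the auxiliary matrix ${\bf W}{\bf x}^{\alpha}$ with ${\bf W}=(\otimes^{s_1}{\bf X}_1)\otimes\cdots\otimes(\otimes^{s_n}{\bf X}_n)$, $s_i=k-\alpha_i$, and reads the expansion off $\mathtt{tr}({\bf W}{\bf x}^{\alpha})=(\mathtt{tr}\,{\bf W})\,{\bf x}^{\alpha}={\bf x}^{\alpha}$, each diagonal entry of ${\bf W}{\bf x}^{\alpha}$ being an element of $D_k({\bf x})$. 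Your slot-fixing selection $T$ accomplishes the same factorization but keeps $\delta$ defined on exactly the index set over which item (2) is summed, so the complementation step requires no tacit identification of $Diag({\bf W}{\bf x}^{\alpha})$ as a sub-multiset of $Diag({\bf X})$ — a matching the paper never spells out. Second, and more importantly, your opening observation is a genuine sharpening rather than a pedantic one: the lemma as literally stated sums over the \emph{set} $D_k({\bf x})$, and your counterexample ($n=1$, $k=2$: $1-x^{2}$ is not a $\{0,1\}$-combination of the three distinct values $(1-x)^{2}$, $(1-x)x$, $x^{2}$) shows that reading is false; what both proofs actually establish is the positional (multiset) version, where repeated diagonal entries are counted with multiplicity. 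The paper obscures this by writing $Diag({\bf X})=D_k({\bf x})$ in Lemma \ref{KRO1}, thereby identifying a $2^{nk}$-entry diagonal with a set, and by silently switching to positional indexing ${\bf y}_j$, $j=1,\ldots,2^{nk}$, in the proof of Theorem \ref{KRO3}. Your version states, and proves, the claim that is actually needed downstream and actually true.
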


\begin{proof}
First, let us define, for any $i \in \{1, \ldots, n\}$, a matrix ${\bf X_i}$ as follows 
$$ {\bf X}_i = \left(\begin{array}{cc}
1-x_i & 0 \\
0 & x_i
\end{array}\right)$$

\nl 1) Let ${\bf x}^{\alpha} = x_1^{\alpha_1}x_2^{\alpha_2} \ldots x_n^{\alpha_n}$ such that $\mid \alpha \mid \leq k$. Thus, there are $s_1, \ldots, s_n$ such that $\alpha_i + s_i = k$. Let ${\bf W} = (\otimes^{s_1}{\bf X}_1)\otimes (\otimes^{s_2}{\bf X}_2)\otimes \cdots \otimes (\otimes^{s_n}{\bf X}_n)$ and consider the matrix ${\bf W}{\bf x}^{\alpha}$. In view of Lemma \ref{KRO1}, $Diag({\bf W}{\bf x}^{\alpha}) \subseteq D_{k}(x_1,\ldots,x_n)$ since every element in $Diag({\bf W}{\bf x}^{\alpha})$ is a monomial of order $nk$. Further, since $\mathtt{tr}({\bf W}{\bf x}^{\alpha})= (\mathtt{tr}{\bf W}){\bf x}^{\alpha}= 1{\bf x}^{\alpha}={\bf x}^{\alpha}$, we obtain ${\bf x}^\alpha = \mathtt{tr}({\bf W}{\bf x}^{\alpha})$, i.e. the required polynomial expansion.

\nl2) Let ${\bf X} = (\otimes^k{\bf X}_1)\otimes (\otimes^k{\bf X}_2)\otimes \cdots \otimes (\otimes^k{\bf X}_n) $. By Lemma \ref{KRO1}
$Diag({\bf X}) = D_{k}(x_1,\ldots,x_n)$ and $\mathtt{tr}({\bf X}) = 1$. Upon recalling that ${\bf x}^\alpha = \sum_{{\bf y} \in D_k({\bf x})} \delta_{{\bf y}}{\bf y} $, we define $\gamma_{{\bf y}} = 1$ if $\delta_{{\bf y}} = 0$ and $\gamma_{{\bf y}} = 0$ if $\delta_{{\bf y}} = 1$. Therefore, 

\begin{eqnarray*}
1& =& \mathtt{tr}({\bf X}) \\
& =& \sum_{{\bf y} \in D_k({\bf x})} \delta_{{\bf y}}{\bf y} + \sum_{{\bf y} \in D_k({\bf x})} \gamma_y{\bf y} \\
& =& {\bf x}^\alpha + \sum_{{\bf y} \in D_k({\bf x})} \gamma_y{\bf y}
\end{eqnarray*}
and  $1 - {\bf x}^{\alpha} = \sum_{{\bf y} \in D_k({\bf x})} \gamma_y{\bf y}$

\end{proof}

\nl In virtue of the previous claims, we can prove a technical but rather important theorem:

\begin{theo}\label{KRO3}
Let  ${\bf x} = (x_1, \ldots ,x_n)$ be an $n$-tuple of variables, and let $P({\bf x})= \sum_{{\bf
y} \in D_k({\bf x})} a_{{\bf y}} {\bf y}$ be a polynomial such that ${\bf y} \in D_k({\bf x})$, $0 \leq a_{{\bf y}} \leq 1$ and the restriction $P({\bf x}) \upharpoonright_{[0,1]^n}$ be such that $0 \leq P({\bf x}) \upharpoonright_{[0,1]^n} \leq 1$.
\nl Then there exists a polynomial quantum operation $\mathcal{ P}: \mathcal{ L} (\otimes^{nk} \mathbb{C}^2)
\rightarrow \mathcal{ L} (\otimes^{nk} \mathbb{C}^2)$ such that, for
any n-tuple $\sigma = (\sigma_1, \ldots, \sigma_n)$ in $\mathcal{
D}(\mathbb{C}^2)$, $$\mathtt{p}(\mathcal{ P}((\otimes^k \sigma_1)
\otimes \cdots \otimes  (\otimes^k \sigma_n))) =
P(\mathtt{p}(\sigma_1), \ldots, \mathtt{p}(\sigma_n)).$$
 Moreover,
$$\mathcal{ P}((\otimes^k \sigma_1) \otimes \cdots \otimes  (\otimes^k
\sigma_n)) = (\frac{1}{2^{nk-1}} \otimes^{nk-1} I) \otimes
\rho_{P(\mathtt{p}(\sigma_1), \ldots, \mathtt{p}(\sigma_n))} .$$\\

\end{theo}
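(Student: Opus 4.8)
The plan is to construct $\mathcal{P}$ directly as a Kraus map and then invoke Theorem \ref{kraus}. First I would put $x_i=\mathtt{p}(\sigma_i)$ and note that, by the one-qubit representation in terms of the probability of truth, the operator $\Sigma:=(\otimes^k\sigma_1)\otimes\cdots\otimes(\otimes^k\sigma_n)$ is exactly the matrix ${\bf X}$ of Lemma \ref{KRO1} evaluated at $\mathbf{p}:=(x_1,\ldots,x_n)$. Hence $Diag(\Sigma)=D_k(\mathbf{p})$: labelling the computational basis of $\otimes^{nk}\mathbb{C}^2$ by $\vert j\rangle$ with $0\le j<2^{nk}$, each diagonal entry $\mathbf{y}_j:=\langle j\vert\Sigma\vert j\rangle$ is a monomial of $D_k(\mathbf{p})$, every element of $D_k(\mathbf{p})$ occurs as some $\mathbf{y}_j$, and $\sum_j\mathbf{y}_j=\mathtt{tr}(\Sigma)=1$. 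These are the only facts about $\Sigma$ I need; the off-diagonal data $b_i$ never enter, since the operators below read diagonal entries only.

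Next I would assemble the Kraus family. Writing a basis vector as $\vert m,\epsilon\rangle:=\vert m\rangle\otimes\vert\epsilon\rangle$ with $0\le m<2^{nk-1}$ and $\epsilon\in\{0,1\}$, I fix for each $\mathbf{y}\in D_k(\mathbf{p})$ one representative index $j_{\mathbf{y}}$ with $\mathbf{y}_{j_{\mathbf{y}}}=\mathbf{y}$ and set
\begin{equation*}
T_{\mathbf{y},m}=\frac{\sqrt{a_{\mathbf{y}}}}{\sqrt{2^{nk-1}}}\,\vert m,1\rangle\langle j_{\mathbf{y}}\vert,\qquad F_{j,m}=\frac{\sqrt{w_j}}{\sqrt{2^{nk-1}}}\,\vert m,0\rangle\langle j\vert,
\end{equation*}
where $w_{j_{\mathbf{y}}}=1-a_{\mathbf{y}}$ at representatives and $w_j=1$ otherwise, letting $\mathcal{P}$ be the operation with Kraus operators $\{T_{\mathbf{y},m}\}\cup\{F_{j,m}\}$. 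Morally, the $T_{\mathbf{y},m}$ pour the weighted entry $a_{\mathbf{y}}\mathbf{y}$ uniformly into the truth-component of the last qubit, while the $F_{j,m}$ flush the remaining diagonal mass into its falsity-component; the slack $1-a_{\mathbf{y}}$ left at each representative is precisely what restores trace preservation.

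It then remains to check the two identities, which I expect to be routine once the weights are in place. Because $T_{\mathbf{y},m}^\dagger T_{\mathbf{y},m}=\tfrac{a_{\mathbf{y}}}{2^{nk-1}}\vert j_{\mathbf{y}}\rangle\langle j_{\mathbf{y}}\vert$ and $F_{j,m}^\dagger F_{j,m}=\tfrac{w_j}{2^{nk-1}}\vert j\rangle\langle j\vert$, summing over the $2^{nk-1}$ values of $m$ replaces the prefactors by $a_{\mathbf{y}}$ and $w_j$, and the total weight at every index $j$ equals $1$, so $\sum_{\mathbf{y},m}T_{\mathbf{y},m}^\dagger T_{\mathbf{y},m}+\sum_{j,m}F_{j,m}^\dagger F_{j,m}=I$ and $\mathcal{P}$ is a quantum operation by Theorem \ref{kraus}. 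For the action, $T_{\mathbf{y},m}\Sigma T_{\mathbf{y},m}^\dagger=\tfrac{a_{\mathbf{y}}\mathbf{y}}{2^{nk-1}}\vert m,1\rangle\langle m,1\vert$ and $F_{j,m}\Sigma F_{j,m}^\dagger=\tfrac{w_j\mathbf{y}_j}{2^{nk-1}}\vert m,0\rangle\langle m,0\vert$; summing, and using $P(\mathbf{p})=\sum_{\mathbf{y}}a_{\mathbf{y}}\mathbf{y}$ together with $\sum_j w_j\mathbf{y}_j=\sum_j\mathbf{y}_j-\sum_{\mathbf{y}}a_{\mathbf{y}}\mathbf{y}=1-P(\mathbf{p})$, collapses the output to $(\tfrac{1}{2^{nk-1}}\otimes^{nk-1}I)\otimes\rho_{P(\mathbf{p})}$, whence $\mathtt{p}(\rho_\lambda)=\lambda$ yields $\mathtt{p}(\mathcal{P}(\Sigma))=P(\mathbf{p})$. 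The real obstacle is exactly this trace-preservation bookkeeping: since one monomial $\mathbf{y}$ sits on many diagonal positions while $P$ counts it once, the truth-mass must be routed through a single representative per monomial and all leftover weight absorbed by the falsity channel, and that these leftovers reassemble precisely into $1-P(\mathbf{p})$ is the density-operator form of the identity $\sum_j\mathbf{y}_j=1$ underlying Lemma \ref{KRO2}.
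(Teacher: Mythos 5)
Your proposal is correct and takes essentially the same route as the paper: the same rank-one Kraus operators (scaled matrix units $\sqrt{a/2^{nk-1}}\,\vert \cdot\rangle\langle j\vert$ routing each diagonal monomial of $\sigma$ uniformly into the truth and falsity components of the last qubit), the same completeness check $\sum A^\dagger A = I$ via Theorem \ref{kraus}, and the same output form $(\frac{1}{2^{nk-1}}\otimes^{nk-1}I)\otimes\rho_{P}$. The only difference is that you make explicit the bookkeeping for monomials repeated along the diagonal, routing the truth-mass through a single representative index per monomial, whereas the paper handles this implicitly by re-indexing the coefficients as $P=\sum_{j=1}^{2^{nk}}a_j\mathbf{y}_j$; your construction is exactly the paper's under the assignment $a_{j}=a_{\mathbf{y}}$ at representatives and $a_j=0$ elsewhere, with the vanishing Kraus operators dropped.
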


\begin{proof}
Let ${\sigma}_1, \ldots, {\sigma}_n$ be density operators on $\mathbb{C}^2$. Assume that for any ${\sigma}_i$
$$ {\sigma}_i = \left(\begin{array}{cc}
1-x_i & b_i \\
b_i^* & x_i
\end{array}\right)\\
$$
\nl Hence, $\mathtt{p}(\sigma_i) = x_i$. Evidently, $\sigma = (\otimes^k \sigma_1)\otimes  \cdots \otimes (\otimes^k \sigma_n)$ is a  $2^{nk} \times 2^{nk}$ matrix and, by Lemma \ref{KRO1}, $Diag(\sigma) = D_k(x_1,\ldots,x_n)$.  Thus,  each ${\bf y} \in D_k({\bf x})$ can be seen as the $(i,i)$-th entry of $Diag(\sigma)$. Further, the polynomial $P({\bf x}) = \sum_{{\bf y} \in D_k({\bf x})} a_{{\bf y}} {\bf y} = \sum_{j=1}^{2^{nk}}a_j {\bf y}_j $ is such that every ${\bf y}_j$ is the $(j,j)$-th entry of $Diag(\sigma)$.  Let, now, ${\bf y}_{j_0} \in Diag(\sigma)$.

\bigskip

a) We want to place the elements of the form $a_{j_0}{\bf y}_{j_0}$ in the $(2s, 2s)$-th
entries of a $2^{nk} \times 2^{nk}$ matrix.
\\ Let us consider the
$2^{nk} \times 2^{nk}$ matrix $A_{j_0}^{2s} =
\sqrt{\frac{a_{j_0}}{2^{nk-1}}} D_{j_0}^{2s}$  such that
$D_{j_0}^{2s}$ has $1$ just in the $(2s, j_0)$-th entry and $0$
in any other entry. One may verify that $A_{j_0}^{2s} \sigma (A_{j_0}^{2s})^\dagger$ is the required
matrix. Moreover:
$$
\sum_{2s}A_{j_0}^{2s} \sigma (A_{j_0}^{2s})^\dagger = \frac{1}{2^{nk-1}}
\left(\begin{array}{ccccc}
0 & 0 & 0 & 0 & \ldots \\
0 & a_{j_0}{\bf y}_{j_0} & 0 & 0 & \ldots \\
0 & 0 & 0 & 0 & \ldots \\
0 & 0 & 0 & a_{j_0}{\bf y}_{j_0} & \ldots  \\
\vdots & \vdots & \vdots & \vdots & \ddots \\
\end{array}\right)
$$
\bigskip

b) Let us recall that $ 1 =  \sum_{{\bf y} \in D_k({\bf x})}
{\bf y} = \sum_{j=1}^{2^{nk}}{\bf y}_j $. Then:

$$1 -\sum_{j=1}^{2^{nk}}a_j {\bf y}_j = \sum_{j=1}^{2^{nk}}{\bf y}_j -
\sum_{j=1}^{2^{nk}}a_j {\bf y}_j = \sum_{j=1}^{2^{nk}} (1-
a_j){\bf y}_j.$$

\nl We now want to stick the elements of the form $(1 - a_{j_0}){\bf y}_{j_0}$ into the $(2s-1,
2s-1)$-th entries of a $2^{nk} \times 2^{nk}$ matrix. Let us
consider the $2^{nk} \times 2^{nk}$ matrix $A_{j_0}^{2s-1} =
\sqrt{\frac{1- a_{j_0}}{2^{nk-1}}} D_{j_0}^{2s-1}$  such that
$D_{j_0}^{2s-1}$ have $1$ just in the $(2s-1, j_0)$-th entry and
$0$ in any other entry.
Again, one may verify that 
$A_{j_0}^{2s-1} \sigma (A_{j_0}^{2s-1})^\dagger$ is the required
matrix.
 Furthermore:
 
\begin{eqnarray*}
\sum_{2s-1}A_{j_0}^{2s-1} \sigma (A_{j_0}^{2s-1})^\dagger =
\frac{1}{2^{nk-1}}
\left(\begin{array}{ccccc}
(1 - a_{j_0}){\bf y}_{j_0} & 0 & 0 & 0 & \ldots \\
0 & 0 & 0 & 0 & \ldots \\
0 & 0 & (1 - a_{j_0}){\bf y}_{j_0} & 0 & \ldots \\
0 & 0 & 0 & 0 & \ldots  \\
\vdots & \vdots & \vdots & \vdots & \ddots \\
\end{array}\right)
\end{eqnarray*}

\vspace{0.5cm}

\noindent Thus, some calculations show that:

\begin{eqnarray*}
 \mathcal{ P} &=&\sum_{j_0}\sum_{2s}A_{j_0}^{2s} \sigma (A_{j_0}^{2s})^\dagger +
\sum_{j_0}\sum_{2s-1}A_{j_0}^{2s-1} \sigma
(A_{j_0}^{2s-1})^\dagger \\
&=&(\frac{1}{2^{kn-1}} \otimes^{nk-1} I) \otimes \left(\begin{array}{cc}
1-\sum_{j=1}^{2^{nk}}a_j {\bf y}_j & 0 \\
0 & \sum_{j=1}^{2^{nk}}a_j {\bf y}_j
\end{array}\right)
\end{eqnarray*}

%$ \mathcal{ P} =
%\sum_{j_0}\sum_{2s}A_{j_0}^{2s} \sigma (A_{j_0}^{2s})^\dagger +
%\sum_{j_0}\sum_{2s-1}A_{j_0}^{2s-1} \sigma
%(A_{j_0}^{2s-1})^\dagger = $
%
%
%$$=(\frac{1}{2^{kn-1}} \otimes^{nk-1} I) \otimes \left(\begin{array}{cc}
%1-\sum_{j=1}^{2^{nk}}a_j {\bf y}_j & 0 \\
%0 & \sum_{j=1}^{2^{nk}}a_j {\bf y}_j
%\end{array}\right).$$
%
%
%
\vspace{0.4cm}

\nl Now, set $A = \sum_{j_0}\sum_{2s}(A_{j_0}^{2s})^\dagger
A_{j_0}^{2s} + \sum_{j_0}\sum_{2s+1}(A_{j_0}^{2s+1})^\dagger
A_{j_0}^{2s+1}$. Our task is to verify that $A = I$.

\bigskip

c) First of all, notice that the matrix $(A_{j_0}^{2s})^\dagger
A_{j_0}^{2s}$ has the value $\frac{a_{j_0}}{2^{nk-1}}$ just in the
$(j_0, j_0)$-th entry, while any other entry is $0$. Therefore, the
matrix $\sum_{2s}(A_{j_0}^{2s})^\dagger A_{j_0}^{2s}$ has the
value $\frac{2^{nk-1} a_{j_0}}{2^{nk-1}} = a_{j_0}$ in the $(j_0,
j_0)$-th entry and all the other entries equal $0$. Hence:
$$ \sum_{j_0}\sum_{2s}(A_{j_0}^{2s})^\dagger A_{j_0}^{2s} =
\left(\begin{array}{ccccc}
a_1 & 0 & 0 & 0 & \ldots \\
0 & a_2  & 0 & 0 & \ldots \\
0 & 0 & a_3 & 0 & \ldots \\
0 & 0 & 0 & a_4 & \ldots  \\
\vdots & \vdots & \vdots & \vdots & \ddots \\
\end{array}\right)
$$
\bigskip

d) On the other hand the matrix $(A_{j_0}^{2s-1})^\dagger
A_{j_0}^{2s-1}$ has the value $\frac{1- a_{j_0}}{2^{nk-1}}$ just
in the $(j_0, j_0)$-th entry and $0$ in any other. Therefore, the matrix $\sum_{2s-1}(A_{j_0}^{2s-1})^\dagger
A_{j_0}^{2s-1}$ has the value
$\frac{2^{nk-1}(1-a_{j_0})}{2^{nk-1}} = 1- a_{j_0}$ in the $(j_0,
j_0)$-th entry and any other is $0$. Hence:
$$ \sum_{j_0}\sum_{2s-1}(A_{j_0}^{2s-1})^\dagger A_{j_0}^{2s-1} =\left(\begin{array}{ccccc}
1-a_1 & 0 & 0 & 0 & \ldots \\
0 & 1- a_2  & 0 & 0 & \ldots \\
0 & 0 & 1- a_3 & 0 & \ldots \\
0 & 0 & 0 & 1- a_4 & \ldots  \\
\vdots & \vdots & \vdots & \vdots & \ddots \\
\end{array}\right)
$$
\bigskip

\nl Thus, $\sum_{j_0}\sum_{2s}(A_{j_0}^{2s})^\dagger A_{j_0}^{2s-1} +
\sum_{j_0}\sum_{2s-1}(A_{j_0}^{2s-1})^\dagger A_{j_0}^{2s-1} = I$. Whence,  by Theorem \ref{kraus}, $\mathcal{ P}$ is a quantum operation.
\end{proof}

\vspace{0.3cm}

\nl In virtue of Theorem \ref{KRO3}, we can now establish a probabilistic version of the classical Stone-Weierstrass theorem.

\begin{theo}\label{KRO4}
Let ${\bf x} = (x_1, \ldots, x_n)$ be an $n$-tuple of variables and $f:[0,1]^n \rightarrow (0,1)$ be a continuous function. Then, for each $\epsilon > 0 $ there exists a quantum operation  $\mathcal{ P}: \mathcal{ L} (\otimes^{nk} \mathbb{C}^2) \rightarrow \mathcal{ L} (\otimes^{nk} \mathbb{C}^2)$ and a constant $M \geq 1$ such that, for any density matrix $\sigma =(\otimes^k \sigma_1) \otimes \cdots \otimes  (\otimes^k \sigma_n)$, the following inequality holds: 
$$\mid \mathtt{p}(\mathcal{ P}(\sigma)) - \frac{1}{M} f(\mathtt{p}(\sigma_1), \ldots, \mathtt{p}(\sigma_n) ) \mid \leq \epsilon.$$
\end{theo}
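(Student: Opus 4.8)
The plan is to deduce the statement from Theorem \ref{KRO3} by approximating $f$ with a polynomial that is already written over the basis $D_k({\bf x})$ and whose coefficients all lie in $[0,1]$; the constant $M$ will be precisely the normalization that forces those coefficients below $1$. The right tool is the \emph{multivariate Bernstein polynomial}, because it furnishes an approximant of $f$ whose expansion over $D_k({\bf x})$ has manifestly nonnegative coefficients, a property that a generic Weierstrass approximant would lack.

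First I would invoke the classical (multivariate) Bernstein approximation theorem: for $f$ continuous on the compact cube $[0,1]^n$, the polynomials
\[
B_k(f)({\bf x}) = \sum_{\beta} f\!\left(\tfrac{\beta_1}{k}, \ldots, \tfrac{\beta_n}{k}\right) \prod_{i=1}^n \binom{k}{\beta_i}\, x_i^{\beta_i}(1-x_i)^{k-\beta_i},
\]
the sum running over all $\beta = (\beta_1, \ldots, \beta_n)$ with $0 \le \beta_i \le k$, converge uniformly to $f$ on $[0,1]^n$ as $k \to \infty$. Given $\epsilon > 0$, I fix $k$ so large that $\sup_{{\bf x} \in [0,1]^n} |B_k(f)({\bf x}) - f({\bf x})| < \epsilon$.

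Next I would read off the coefficients of $B_k(f)$ in the basis $D_k({\bf x})$. Putting $\alpha_i = k - \beta_i$, each summand above is a scalar multiple of the monomial ${\bf y}_\beta = \prod_{i=1}^n (1-x_i)^{\alpha_i} x_i^{\beta_i} \in D_k({\bf x})$, so that $B_k(f) = \sum_{{\bf y}_\beta \in D_k({\bf x})} c_\beta\, {\bf y}_\beta$ with $c_\beta = f(\tfrac{\beta}{k})\prod_{i=1}^n \binom{k}{\beta_i} \ge 0$. Nonnegativity of every $c_\beta$ is the decisive point, and it rests exactly on the hypothesis that $f$ maps into $(0,1)$ (and on the choice of a Bernstein, rather than an arbitrary, approximant). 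I then set $M = \max\{1,\, \max_\beta c_\beta\} \ge 1$ and define
\[
P({\bf x}) = \tfrac{1}{M}\, B_k(f)({\bf x}) = \sum_{{\bf y}_\beta \in D_k({\bf x})} \tfrac{c_\beta}{M}\, {\bf y}_\beta,
\]
whose coefficients $c_\beta/M$ now lie in $[0,1]$. Since the Bernstein basis polynomials are nonnegative on $[0,1]^n$ and sum to $1$, $B_k(f)$ is a convex combination of the values $f(\tfrac{\beta}{k}) \in (0,1)$ and hence takes values in $(0,1)$; together with $M \ge 1$ this yields $0 \le P \upharpoonright_{[0,1]^n} \le 1$. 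Thus $P$ satisfies all the hypotheses of Theorem \ref{KRO3}.

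Finally, applying Theorem \ref{KRO3} to $P$ produces a polynomial quantum operation $\mathcal{P} : \mathcal{L}(\otimes^{nk}\mathbb{C}^2) \to \mathcal{L}(\otimes^{nk}\mathbb{C}^2)$ with $\mathtt{p}(\mathcal{P}(\sigma)) = P(\mathtt{p}(\sigma_1), \ldots, \mathtt{p}(\sigma_n))$ for every $\sigma = (\otimes^k \sigma_1) \otimes \cdots \otimes (\otimes^k \sigma_n)$. Writing $x_i = \mathtt{p}(\sigma_i)$ and ${\bf x} = (x_1, \ldots, x_n)$, the required estimate is then immediate:
\[
\left| \mathtt{p}(\mathcal{P}(\sigma)) - \tfrac{1}{M} f({\bf x}) \right| = \tfrac{1}{M}\,\bigl| B_k(f)({\bf x}) - f({\bf x}) \bigr| \le \bigl| B_k(f)({\bf x}) - f({\bf x}) \bigr| < \epsilon,
\]
where the middle inequality uses $1/M \le 1$. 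I expect the only real obstacle to be the bridge to Theorem \ref{KRO3}, namely manufacturing an approximant of $f$ whose $D_k$-expansion has all coefficients in $[0,1]$: nonnegativity is secured by the Bernstein construction together with $f > 0$, while the upper bound $1$ — and, correspondingly, the factor $1/M$ appearing in the statement — is forced by normalizing the binomially inflated coefficients $c_\beta$.
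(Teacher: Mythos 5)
Your proof is correct, but it follows a genuinely different route from the paper's. The paper starts from an arbitrary Stone--Weierstrass approximant $P_0$ of $f$, which in general has negative coefficients, and then removes the negativity by hand: the positive-coefficient monomials are kept as $\frac{a_{\alpha_i}}{M}{\bf x}^{\alpha_i}$, each negative-coefficient monomial $a_{\beta_j}{\bf x}^{\beta_j}$ is rewritten as $\frac{\mid a_{\beta_j}\mid}{M}(1-{\bf x}^{\beta_j})$ at the cost of a constant shift $\frac{\sum_j \mid a_{\beta_j}\mid}{M}$, and $M$ is chosen large enough that this shift is at most $\frac{\epsilon}{2}$; Lemma \ref{KRO2} is then invoked to expand both ${\bf x}^{\alpha}$ and $1-{\bf x}^{\beta}$ as $\{0,1\}$-combinations over $D_k({\bf x})$, so that Theorem \ref{KRO3} applies. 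You bypass all of this by choosing the approximant wisely from the start: the multivariate Bernstein polynomial $B_k(f)$ is already a nonnegative combination of elements of $D_k({\bf x})$, so no sign-splitting and no appeal to Lemma \ref{KRO2} is needed, and your $M$ plays a different role (normalizing the binomially inflated coefficients $f(\beta/k)\prod_i\binom{k}{\beta_i}$ below $1$) than the paper's (suppressing the constant shift below $\frac{\epsilon}{2}$). Your route is more constructive --- Bernstein approximation gives an explicit $k$ from a modulus of continuity of $f$, whereas the paper uses Stone--Weierstrass as a black box --- and it is also cleaner in the error accounting, since $\mid P - \frac{1}{M}f\mid = \frac{1}{M}\mid B_k(f)-f\mid$ needs no splitting of $\epsilon$. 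One further observation: inside the proof of Theorem \ref{KRO3} the sum $\sum_{{\bf y}\in D_k({\bf x})} a_{\bf y}{\bf y}$ is actually read off entry-by-entry along the $2^{nk}$ diagonal positions of $\sigma$, where each monomial of $D_k({\bf x})$ occurs with multiplicity $\prod_i\binom{k}{\beta_i}$; if you distributed each Bernstein coefficient over those repeated diagonal entries, the per-entry coefficients would be exactly $f(\beta/k)\in(0,1)$, so you could even take $M=1$ and approximate $f$ itself rather than $\frac{1}{M}f$ --- a strengthening neither your writeup nor the paper exploits.
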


\begin{proof}
Let $f:[0,1]^n \rightarrow (0,1)$ be a continuous function. By the classical Stone-Weierstrass theorem, there exists a polynomial $P_0(x_1, \ldots, x_n) = \sum_{\mid \alpha \mid \leq k }a_{\alpha}{\bf x}^\alpha $ such that for each $ \epsilon > 0 $, $\mid P_0-f  \mid \leq \frac{\epsilon}{2} $. Let $a_{\alpha_1}, \ldots, a_{\alpha_n} $ be positive coefficients and $a_{\beta_1}, \ldots, a_{\beta_s} $ be negative coefficients in the polynomial $P_0(x_1, \ldots, x_n)$. Let $M$ be a positive real number such that $\frac{\sum_{i=1}^{s}\mid a_{\beta_i} \mid}{M} \leq \frac{\epsilon}{2}$. Let us define a polynomial $P$ by $P(x_1, \ldots, x_n) = \sum_{i=1}^n \frac{a_{\alpha_i}}{M}{\bf x}^{\alpha_i} + \sum_{j=1}^s \frac{\mid a_{\beta_j}\mid}{M}(1- {\bf x}^{\beta_j}) $. Then, we obtain that $P(x_1, \ldots, x_n) = \frac{1}{M} P_0(x_1, \ldots, x_n) +  \frac{\sum_{i=1}^{s}\mid a_{\beta_i} \mid}{M}$. Therefore, in $[0,1]^n$:

\begin{eqnarray*}
 \mid P - \frac{1}{M}f  \mid  &= &\mid P - \frac{1}{M}P_0 +\frac{1}{M} P_0 - \frac{1}{M}f  \mid\\
& \leq &\mid P - \frac{1}{M}P_0 \mid + \mid \frac{1}{M} P_0 - \frac{1}{M}f  \mid\\
&  \leq & \frac{\epsilon}{2} + \frac{\epsilon}{M}\\& \leq& \epsilon 
\end{eqnarray*}

\nl So, by Lemma \ref{KRO2}, we obtain that $P({\bf x}) = \sum_{{\bf y} \in D_k({\bf x})} a_{{\bf y}} {\bf y}$ is such that  $a_{{\bf y}} \geq 0$ and $0\leq P({\bf x}) \leq 1$ in $[0,1]^n$. Whence, by Theorem \ref{KRO3}, there exists a quantum operation $\mathcal{ P}:\mathcal{ L}(\otimes^{nk} \mathbb{C}^2) \rightarrow \mathcal{ L}(\otimes^{nk}\mathbb{C}^2)$ associated to $P$ such that for each n-tuple $\sigma = (\sigma_1, \ldots, \sigma_n)$, with $\sigma_i$ in $\mathcal{D}(\mathbb{C}^2)$, $\mathtt{p}(\mathcal{ P}((\otimes^k \sigma_1) \otimes \cdots \otimes  (\otimes^k \sigma_n))) = P(x_1 /\mathtt{p}(\sigma_1), \ldots, x_n/\mathtt{p}(\sigma_n))$.\footnote{By $x_i/\mathtt{p}(\sigma_i)$ we mean the attribution of the value $\mathtt{p}(\sigma_i)$ to the variable $x_{i}$. } Thus $\mid \mathtt{p}(\mathcal{ P}((\otimes^k \sigma_1) \otimes \cdots \otimes  (\otimes^k \sigma_n)))) - \frac{1}{M} f(\mathtt{p}(\sigma_1), \ldots, \mathtt{p}(\sigma_n) ) \mid \leq \epsilon$.
\end{proof}

%
%\nl Intuitively, Theorem \ref{KRO4} states that for any  continuous function $f:[0,1]^n \rightarrow (0,1)$, and any chosen $\epsilon>0$, there is a ``polynomial quantum operation'' $\mathcal{ P}: \mathcal{ L} (\otimes^{nk} \mathbb{C}^2)\rightarrow \mathcal{ L} (\otimes^{nk} \mathbb{C}^2)$, whose probability ${\tt p}$

\section{Representing the standard PMV-operations}\label{rcn}
\nl In this section we apply the results obtained to two functions (namely, the product t-norm $\bullet$, and the \L ukasiewicz conorm $\oplus$) of definite importance in fuzzy logic. Let us recall some notions first.
\nl The \emph{standard PMV-algebra} \cite{EGM, MR} is the algebra $$[0,1]_{PMV} = \langle [0,1], \oplus, \bullet, \neg, 0, 1\rangle,$$ where $[0,1]$ is the real unit segment, $x \oplus y = \min(1, x+y)$, $\bullet$ is the real product, and $\neg x = 1-x$. This structure plays a notable role in quantum computing, in that it decribes, in a probabilistic way, a relevant system of quantum gates named {\it Poincar\`{e} irreversible quantum computational algebra} \cite{CDGL1, DF}.   
 The connection between quantum computational logic with mixed states and fuzzy logic, comes from the election of a system of quantum gates such that, when interpreted under probabilistic semantics, they turn out in some kind of operation in the real interval $[0,1]$. The above-mentionated $PMV-algebra$ is a structure that represents algebraic counterpart of the probabilistic semantics conceived from the continuous $t$-norm. On the other hand, the use of fuzzy logics (and infinite-valued \L ukasiewicz logic in particular) in game theory and theoretical physics was pioneered in \cite{M1,M2}, linking the mentioned structures with Ulam games and $AF-C^*$-algebras, respectively. We will pay special attention to the study on the \L ukasiewicz $t$-norm, due to its relation with Ulam games and its possible applications to error-correction codes in the context of quantum computation.
\nl Evidently, $\neg$ can be expressed as a polynomial in the generator system $D_1(x)$; whence by Theorem \ref{KRO3}, it is representable as a polynomial quantum operation. A possible representation can be the following: $NOT(\rho) = \sigma_x \rho \sigma^{\dagger}_x$. In fact, $\mathtt{p}(NOT(\rho)) = 1 - \mathtt{p}(\rho)$.

\nl Furthermore, $\bullet$ can be represented by a polynomial in the generator system $D_2(x,y)$. According with the construction presented in Theorem \ref{KRO3}, the following representation obtains. Let us consider the following matrices:
 {\Small
\begin{eqnarray*}
 G_1 =
\left(\begin{array}{cccc}
\frac{1}{\sqrt{2}} & 0 & 0 & 0 \\
0 & 0 & 0 & 0 \\
0 & 0 & 0 & 0 \\
0 & 0 & 0 & 0 \\
\end{array}\right)
& G_2 =
\left(\begin{array}{cccc}
0 & \frac{1}{\sqrt{2}} & 0 & 0 \\
0 & 0 & 0 & 0 \\
0 & 0 & 0 & 0 \\
0 & 0 & 0 & 0 \\
\end{array}\right)
&G_3 =
\left(\begin{array}{cccc}
0 & 0 & \frac{1}{\sqrt{2}} & 0 \\
0 & 0 & 0 & 0 \\
0 & 0 & 0 & 0 \\
0 & 0 & 0 & 0 \\
\end{array}\right)\\
G_4 =
\left(\begin{array}{cccc}
0 & 0 & 0 & 0 \\
0 & 0 & 0 & 0 \\
\frac{1}{\sqrt{2}} & 0 & 0 & 0 \\
0 & 0 & 0 & 0 \\
\end{array}\right)
&G_5 =
\left(\begin{array}{cccc}
0 & 0 & 0 & 0 \\
0 & 0 & 0 & 0 \\
0 & \frac{1}{\sqrt{2}} & 0 & 0 \\
0 & 0 & 0 & 0 \\
\end{array}\right)
 &G_6 =
\left(\begin{array}{cccc}
0 & 0 & 0 & 0 \\
0 & 0 & 0 & 0 \\
0 & 0 & \frac{1}{\sqrt{2}} & 0 \\
0 & 0 & 0 & 0 \\
\end{array}\right)\\
G_7 =
\left(\begin{array}{cccc}
0 & 0 & 0 & 0 \\
0 & 0 & 0 & 0 \\
0 & 0 & 0 & 0 \\
0 & 0 & 0 & \frac{1}{\sqrt{2}} \\
\end{array}\right)
&G_8 =
\left(\begin{array}{cccc}
0 & 0 & 0 & 0 \\
0 & 0 & 0 & \frac{1}{\sqrt{2}} \\
0 & 0 & 0 & 0 \\
0 & 0 & 0 & 0 \\
\end{array}\right)&
\end{eqnarray*}
}
%
%$$
%G_1 =
%\left(\begin{array}{cccc}
%\frac{1}{\sqrt{2}} & 0 & 0 & 0 \\
%0 & 0 & 0 & 0 \\
%0 & 0 & 0 & 0 \\
%0 & 0 & 0 & 0 \\
%\end{array}\right)
%\hspace{0.3cm}
%G_2 =
%\left(\begin{array}{cccc}
%0 & \frac{1}{\sqrt{2}} & 0 & 0 \\
%0 & 0 & 0 & 0 \\
%0 & 0 & 0 & 0 \\
%0 & 0 & 0 & 0 \\
%\end{array}\right)
%\hspace{0.3cm}$$
%\\
%$$G_3 =
%\left(\begin{array}{cccc}
%0 & 0 & \frac{1}{\sqrt{2}} & 0 \\
%0 & 0 & 0 & 0 \\
%0 & 0 & 0 & 0 \\
%0 & 0 & 0 & 0 \\
%\end{array}\right)
%\hspace{0.3cm}
%G_4 =
%\left(\begin{array}{cccc}
%0 & 0 & 0 & 0 \\
%0 & 0 & 0 & 0 \\
%\frac{1}{\sqrt{2}} & 0 & 0 & 0 \\
%0 & 0 & 0 & 0 \\
%\end{array}\right)
%\hspace{0.3cm}$$
%\\
%$$G_5 =
%\left(\begin{array}{cccc}
%0 & 0 & 0 & 0 \\
%0 & 0 & 0 & 0 \\
%0 & \frac{1}{\sqrt{2}} & 0 & 0 \\
%0 & 0 & 0 & 0 \\
%\end{array}\right)
%\hspace{0.3cm}
%G_6 =
%\left(\begin{array}{cccc}
%0 & 0 & 0 & 0 \\
%0 & 0 & 0 & 0 \\
%0 & 0 & \frac{1}{\sqrt{2}} & 0 \\
%0 & 0 & 0 & 0 \\
%\end{array}\right)
%$$
%\\
%$$
%G_7 =
%\left(\begin{array}{cccc}
%0 & 0 & 0 & 0 \\
%0 & 0 & 0 & 0 \\
%0 & 0 & 0 & 0 \\
%0 & 0 & 0 & \frac{1}{\sqrt{2}} \\
%\end{array}\right)
%\hspace{0.3cm}
%G_8 =
%\left(\begin{array}{cccc}
%0 & 0 & 0 & 0 \\
%0 & 0 & 0 & \frac{1}{\sqrt{2}} \\
%0 & 0 & 0 & 0 \\
%0 & 0 & 0 & 0 \\
%\end{array}\right)
%$$

\vspace{0.2cm}

\noindent One may verify that $\sum_{i=1}^{8} G_i(\tau \otimes \sigma)G_i^{\dagger} =
\frac{1}{2}I \otimes \rho_{\mathtt{p}(\tau)\mathtt{p}(\sigma)}$
where $\sigma, \tau \in \mathcal{D}(\mathbb{C}^{2}) $.  Thus, by Kraus representation Theorem \cite{K}, $\sum_{i=1}^{8} G_i(\tau \otimes \sigma)G_i^{\dagger}$ is a quantum operation, and ${\mathtt p}(\sum_{i=1}^{8} G_i(\tau \otimes \sigma)G_i^{\dagger}) = {\mathtt p}(\tau) \bullet {\mathtt p}(\sigma)$. That quantum operation represents the well known quantum gate $IAND$ modulo a tensor power \cite{CDGL1, NIC}.
As regards the \L ukasiewicz conorm $\oplus$, it can be seen that it is not a polynomial, Figure \ref{fig:1}.

\begin{figure}[htbp]
\includegraphics[scale=0.5]{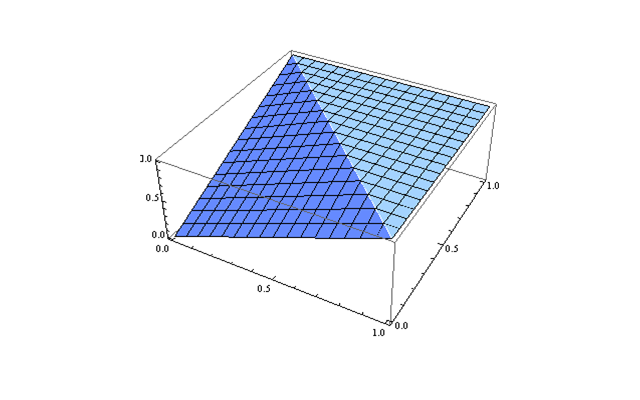}
\caption{The \L ukasiewicz conorm}\label{fig:1}
\end{figure}

\nl Therefore, our idea is to obtain a polynomial $P(x,y)$ in some generator system $D_k(x,y)$, such that $P(x,y)$ can approximate the \L ukasiewicz sum.\nl By using numerical methods, we get the following approximating polynomial of $\oplus$ in $[0,1]$: $$P(x,y) = \frac{5}{12}x(1-x)+\frac{5}{12}y(1-x)+\frac{5}{12}x(1-y)+\frac{5}{12}y(1-y)+\frac{1}{2}x+\frac{1}{2}y,$$ whose graph is depicted in Figure \ref{fig:2}.

\begin{figure}[htbp]
\includegraphics[scale=0.5]{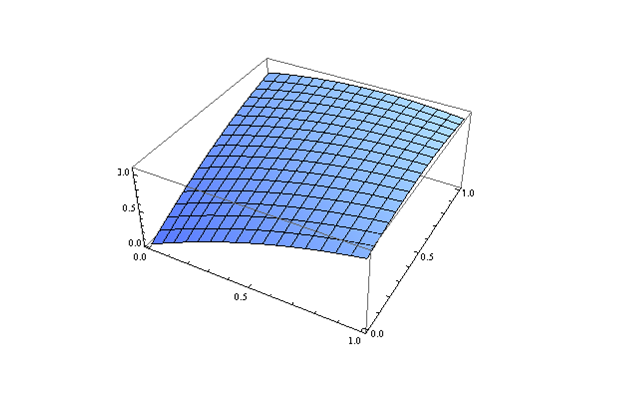}
\caption{$P(x,y)$}\label{fig:2}
\end{figure}

\begin{figure}[htbp]
\includegraphics[scale=0.5]{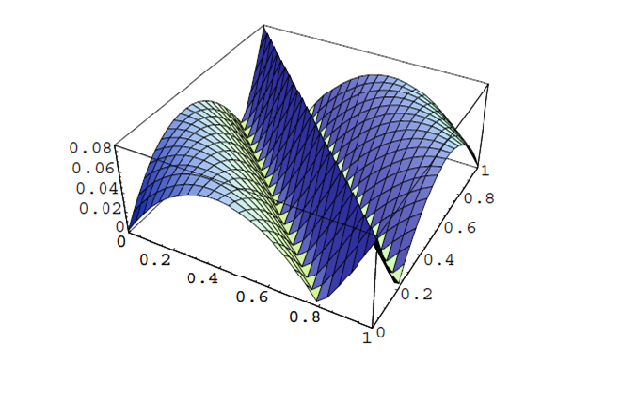}
\caption{$(x\oplus y) - P(x,y)$}\label{fig:3}
\end{figure}

\nl Let us remark that $0\leq P(x,y) \leq x\oplus y$. Therefore, $e = \max_{[0,1]}\{(x\oplus y) - P(x,y)\}\leq 0.08$, as Figure \ref{fig:3} shows. Furthermore, one readily realizes that $P(x,y)$ is a polynomial obtained from the generator system $D_2(x,y)$, that also satisfies the hypothesis of Theorem \ref{KRO3}. Thus, $P(x,y)$ is representable as a polynomial quantum operation $\mathcal{ P}_{\oplus}$, where $${\mathtt p}(\mathcal{ P}_{\oplus}(\tau \otimes \sigma)) = ({\mathtt p}(\tau) \oplus {\mathtt p}(\sigma)) \pm 0.08$$

\section{Conclusions and Open problems}\label{final}

\nl In virtue of the results in Section \ref{rcn}, it turns out that the approximation obtained in the case of PMV-algebras is definitely accurate. Further, in \cite{FSA}, authors show a covergence theorem that allows to achieve every degree of accuracy; the price to pay is the increasing of the degree of the approximating polynomial. In our opinion, that is an interesting achievement, since it provides a (quantum) computational motivation for the investigation of algebraic structures equipped with the \L ukasiewicz sum and, to a certain extent, it relates ``classical'' fuzzy logic to quantum computational logics.\\
\noindent Nonetheless, some general remarks on the whole construction are in order as well.

\begin{enumerate}
\item If one wants to apply our results to the models of (quantum) computing, efficiency is of central importance. Unfortunately, since the number of copies required in our construction corresponds to the degree of the approximating polynomial, it is impossible to generally specify the dimension of the Hilbert space required to achieve, given a certain $\epsilon$, the approximating polynomial.

\item Since the work of Ekert and other scholars \cite{KW, horo1, horo2, horo3}, a direct study of estimations of linear and non-linear functionals of (quantum) states using quantum networks has been proposed. This approach has the advantage that it bypasses quantum tomography, providing more direct estimations of both linear and non-linear functionals of a state. It could be of interest, in our opinion, to investigate if, and in what cases, our construction can be carried out by a quantum network.
\end{enumerate}

%\nl \paragraph{{\bf Acknowledgements.}} 
%The second and third authors were funded by the \emph{Regione Autonoma della Sardegna}, PO Sardegna FSE-M.S 2007-2013, L.R.7/2007.\\

%Since the work of Ekert and other scholars \cite{KW, horo1, horo2, horo3}, a direct study of estimations of linear and non-linear functionals of (quantum) states using quantum networks has been proposed. This approach has the advantage that it bypasses quantum tomography, providing more direct estimations of both linear and non-linear functionals of a state.\\

%\subsubsection*{Aknowledgements}
%We warmly thank Tomasz Kowalski and Francesco Paoli for their precious suggestions during the preparation of this paper.

\end{document}